\newtheorem{theorem}{Theorem}
\newtheorem{proposition}[theorem]{Proposition}
\newtheorem{remark}[theorem]{Remark}
\newenvironment{proof}[1][Proof]{\textbf{#1.} }{\ \rule{0.5em}{0.5em}}
\begin{document}

\title{How to estimate past measurement interventions on a quantum system undergoing continuous monitoring}
\author{John E. Gough \\
Aberystwyth University,\\
Ceredigion, SY23 3BZ, Wales, UK}

\date{}

\maketitle

\begin{abstract}
We analyze the problem of estimating past quantum states of a monitored system from a mathematical perspective in order to ensure self-consistency with the principle of quantum non-demolition. Despite several claims of ``measuring noncommuting observables'' in the physics literature, we show that we are always measuring commuting processes. Our main interest is in the notion of quantum smoothing or retrodiction. In particular, we examine proposals to estimate the result of an external measurement made on an open quantum systems during a period where it is also undergoing continuous monitoring. A full analysis shows that the non-demolition principle is not actually violated, and so a well-posed as a statistical inference problem can be formulated. We extend the formalism to consider multiple independent external measurements made on the system over the course of a continual period of monitoring.
\end{abstract}

\section{Introduction}
Statistical inference uses Bayes Theorem to calculate estimates for unmeasured variables in terms of measured ones. However, this may be applied to quantum theory \textit{only} in the case where all these quantities are described by compatible observables. It is legitimate to estimate the value of a present (filtering) or a future (prediction) observable of an open system based on past measurement of the output processes as these all commuting. But past observables do not commute with future outputs so their estimation (quantum smoothing) violates the non-demolition principle and constitutes a misapplication of Bayes Theorem. 

Nevertheless, there have been several proposals for quantum smoothing \cite{Tsang_A_09}-\cite{LCW_19} addressing the question of whether useful information about the past may still be extracted from future measurements. Here we revisit the proposal by 
Gammelmark, Julsgaard, and K. M\o lmer \cite{Gammelmark13} who consider the situation of an system continuously monitored over the time interval $[0,T]$ upon which we make an external instantaneous measurement at some intermediate time $\tau$. This has been applied to experimental schemes monitoring the fluorescence of superconducting qubits \cite{C14}.

The external measurement is taken to be indirect - that is, we follow von Neumann's scheme of preparing a quantum probe in a state, coupling it to the system at time $\tau$ and then making an instantaneous measurement of a probe observable. It is assumed that the probe is initially not entangled with either the system or it's environment, and that it only couples to the system itself - and not directly with the environment. We refer to such measurements as external \textit{measurement interventions} on the open system. Clearly such interventions perturb the dynamics so that not only is the system affected, but the intervention will also be apparent in the continuous measurement readout. The outcome of the intervention measurement may be concealed, so one may ask for the probability of its outcome conditional on the entire continuously monitored output - both past and future.

The central question is whether such a scheme satisfies the non-demolition principle. If it does not, then we have the situation that the measurements made beyond time $\tau$ degrade the estimate as we are now measuring incompatible observables, and theoretically we have no ground to be applying Bayes Theorem. Following a careful examination of what commutes with what, our main result is to show that the scheme does actually satisfy the non-demolition principle and to use standard quantum filtering theory to provided the least squares estimate for the probe's observable. This rederives the results of 
Gammelmark, \textit{et al.} \cite{Gammelmark13}. In section \ref{sec:MultiTime}, we extend the situation to several independent measurement interventions made over the course of the time interval $[0,T]$.

\subsection{Bayes Rule}

In classical probability we may assign a joint probability $P\left[ A\cap B%
\right] $ for events $A$ and $B$ to occur. (This is always possible because
the events are subsets of the sample space $\Omega $ and so their
intersection always exists as a set.) The conditional probability for $A$ to
occur given $B$ is $P\left( A|B\right) =P\left( A\cap B\right) /P\left(
B\right) $. We then have $P\left( A\cap B\right) =P\left( A|B\right) P\left(
B\right) $ and by symmetry this equals $P\left( B|A\right) P\left( A\right) $
from which we obtain
\begin{eqnarray}
\text{Bayes Theorem: \ \ }P\left( A|B\right) =\frac{P(B|A)P(A)}{P\left(
B\right) }.
\label{eq:Bayes}
\end{eqnarray}
In practice, this is used as follows: we have a model which tells us the probability (technically referred to as \textit{likelihood}) of $B$ occurring if $A$ occurs; it may be difficult/infeasible to determine if $A$ occurs in an experiment, but easy to determine $B$; we then use Bayes Theorem compute the probability of $A$ to occur conditional on seeing that $B$ occurs in experiment. (The usual caveat is that we often do not know $P(A)$, in which case we have to make a guess known as the \textit{prior}.)

The key question is \emph{under what conditions may we apply Bayes Theorem in quantum theory?}

In quantum theory we may encounter probabilities $P\left( A;B\right) $
meaning that we observe event $A$ first then event $B$. But these are not
joint probabilities, they are sequential probabilities and in general we
have $P\left( A;B\right) \neq P\left( B;A\right) $. For instance, suppose
the state is given by density matrix $\rho $ and $A$ and $B$ correspond to
projections $Q$ and $P$ respectively, then
\begin{eqnarray*}
P\left( A;B\right)  &=&\text{tr}\left\{ P\rho PQ\right\} =\text{tr}\left\{
\rho PQP\right\} , \\
P\left( B;A\right)  &=&\text{tr}\left\{ Q\rho QP\right\} =\text{tr}\left\{
\rho QPQ\right\} .
\end{eqnarray*}
If the events are compatible then the projections commute, and so $P\left(
A;B\right) =P\left( B;A\right) =\mathrm{tr}\left\{ \rho PQ\right\} =\mathrm{%
tr}\left\{ \rho QP\right\} $ unambiguously for any state. More generally,
however, if $A$ and $B$ are incompatible then we have no statistical
foundation for using Bayes Theorem. In other words, \emph{while you can numerically use Bayes Theorem for quantum problems, the answers you obtain may be wholly meaningless!}

\bigskip 

A simple description of indirect measurement is the following. Suppose our system has Hilbert space $\mathfrak{h}$ and in prepared in a state with density matrix $\rho$. We set up a probe which has Hilbert space $\mathfrak{H}$ and which is independently prepared in a state $\rho_{\text{probe}}$ and couple it to the system using some unitary $V$. We measure an observable $M$ of the probe which we take to have spectral decomposition $ M= \sum_m P_m$. If we measure the eigenvalue $m$ for the probe observable $M$ then the state of the system updates to
\begin{eqnarray}
\rho \mapsto \frac{1}{ \text{tr} \big\{ \Phi_m (\rho )\big\} }\, \Phi_m ( \rho  ) .
\end{eqnarray}
where $\Phi_m$ is the CP map
\begin{eqnarray}
\Phi_m ( \rho ) = \text{tr}_{\mathfrak{H}} \big\{ V \rho \otimes \rho_{\text{probe}} V^\ast \, I \otimes P_m \big\} .
\end{eqnarray}

As a special case, let $ \{ | m \rangle \}$ form an orthonormal basis for the probe Hilbert space and take the probe to be in the pure state $| \varphi \rangle$. Suppose that, for each $\vert \psi \rangle $  in $\mathfrak{h}$, we have $ V \, |\psi \rangle \otimes \vert \varphi  \rangle = \sum_m V_m |\psi \rangle \otimes | m \rangle$ where the $V_m$'s are operators on the system space. Then
$\Phi_m (\rho ) \equiv V_m \rho V_m^\ast$.

\begin{eqnarray}
\rho \mapsto \frac{1}{ \text{tr} \big\{ \rho V^\ast_m V_m \big\} }\, V_m \rho  V_m^\ast 
\end{eqnarray}

\subsection{Non-Demolition Principle}
There are two simple rules that need to be followed when trying to extract
quantum information from measurements.

\begin{framed}
\noindent \textbf{Rule \# 1 (Self-Non-Demolition Principle)}:\\
 All the observables measured
should be compatible with each other.

\bigskip

\noindent \textbf{Rule \# 2 (Non-Demolition Principle)}:\\
All the observables estimated should be
compatible with all measured observables.
\end{framed}

\bigskip

If we wish to infer useful information about an observable $X$ from measured
observables $\left\{ Y_{1},\cdots ,Y_{N}\right\} $\ and, then the principles
combined will require that the family $\left\{ X,Y_{1},\cdots ,Y_{N}\right\} 
$ is commutative. In this case there exist well-defined joint probabilities
and Bayes Theorem may be applied to condition $X$ on the observations. If
the rules are not followed then we are misapplying Bayes Theorem and,
statistically speaking, are in a state of sin.

In general, given a state $\mathbb{E} = \text{tr} \{ \rho \cdot \}$ we may define the conditional expectation of $X$ given $Y$ provided they are compatible. 
Let $P_X [dx] $ and $P_Y [dy]$ be the projection-valued measures associate with $X$ and $Y$ respectively. In this case $p_{X,Y} [dx , dy ] = \text{tr} \{ \rho \, P_X [dx] P_Y [dy] \}$ is a well-defined joint probability measure for both observables and the conditional probability $p_{X|Y} [ dx |y ]$ is well-defined via
$p_{X|Y} [ dx |y ] p _Y [ dy ] \equiv p_{X,Y} [dx,dy]$ we set 
\begin{eqnarray}
\mathbb{E} [ X|Y ] = \int x \, \hat{P}_X [dx]
\end{eqnarray}
where
\begin{eqnarray}
\hat{P}_X [dx]
= \int p_{X|Y} [ dx |y ] P _Y [ dy ] .
\end{eqnarray}
The $\hat P_{X|Y} [dx]$ is an effect valued measure.

In principle, for a given state $\mathbb{E}$, we can always define the conditional expectation $\mathbb{E} [ X | \mathfrak{Y} ]$ of an observables $X$ onto the algebra $\mathfrak{Y}$ generated by the measured observables provided that the non-demolition rules above hold.

\subsection{Quantum Trajectories (Filtering)}

Let us consider a fairly standard set up in quantum trajectories. We have a
system with state space $\mathfrak{h}$ coupled to a bath $\mathfrak{F}$. The bath
consists of $n$ input processes $b_{\text{in},k}\left( t\right) $ satisfying
singular commutation relations $\left[ b_{\text{in},j}\left( t\right) ,b_{%
\text{in},k}\left( s\right) ^{\ast }\right] =\delta _{jk}\delta \left(
t-s\right) $ and we assume the vacuum state $|\Omega \rangle $ for the bath.
The coupling of the bath to the system over the time interval $t_{1}$ to $%
t_{2}$ is taken to be described by the unitary on $\mathfrak{h}\otimes \mathfrak{F}$%
\begin{eqnarray*}
U\left( t_{2},t_{1}\right) =\vec{T}e^{-i\int_{t_{1}}^{t_{2}}\Upsilon \left(
s\right) ds}
\end{eqnarray*}
where $-i\Upsilon \left( t\right) =\sum_{k}\left\{ L_{k}\otimes b_{\text{in}%
,k}\left( t\right) ^{\ast }-L_{k}^{\ast }\otimes b_{\text{in},k}\left(
t\right) ^{\ast }\right\} -iH$. We note the flow property
\begin{eqnarray}
U\left( t_{3},t_{2}\right) U\left( t_{2},t_{1}\right) =U\left(
t_{3},t_{1}\right) ,\quad \left( t_{3}\geq t_{2}\geq t_{1}\right) .
\label{eq:flow}
\end{eqnarray}

We intend to measure the field quadratures. To this end, set $Z_{k}\left(
t\right) =\int_{0}^{t}\left\{ b_{\text{in},k}\left( s\right) +b_{\text{in}%
,k}\left( s\right) ^{\ast }\right\} ds$, then we note that
\begin{eqnarray}
\left[ Z_{j}\left( t\right) ,Z_{k}\left( s\right) \right] =0.
\label{eq:Z_commutes}
\end{eqnarray}
However, what we must measure are the outputs
\begin{eqnarray*}
Y_{k}\left( t\right) = U\left( t,0\right) ^{\ast }\left\{ I\otimes
Z_{k}\left( t\right) \right\} U\left( t,0\right)  =\int_{0}^{t}\left\{ b_{\text{out},k}\left( s\right) +b_{\text{out}%
,k}\left( s\right) ^{\ast }\right\} ds.
\end{eqnarray*}
One may show that
\begin{eqnarray*}
b_{\text{out},k}\left( t\right) =I\otimes b_{\text{in},k}\left( t\right)
+j_{t}\left( L_{k}\right) 
\end{eqnarray*}
where, for any system operator $X$,  
\begin{eqnarray*}
j_{t}\left( X\right) =U\left( t,0\right) ^{\ast }\left\{ X\otimes I\right\}
U\left( t,0\right) .
\end{eqnarray*}

\begin{proposition}
\label{prop:self_ND_basic}
The family $\big\{ Y_{k}\left( t\right) :k=1,\cdots ,n,0\leq t\leq
T\big\} $ is a commutative family.
\end{proposition}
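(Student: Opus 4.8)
The plan is to use the flow property (\ref{eq:flow}) to rewrite every $Y_k(s)$ as the conjugate of $I\otimes Z_k(s)$ by one and the same unitary, and then to invoke the commutativity (\ref{eq:Z_commutes}) of the $Z_k$'s among themselves. Fix $k$ and $s$, and let $t\geq s$. Using $U(t,0)=U(t,s)U(s,0)$ together with unitarity, we have $U(s,0)=U(t,s)^{\ast }U(t,0)$, so that
\begin{eqnarray*}
Y_{k}\left( s\right) =U(s,0)^{\ast }\left\{ I\otimes Z_{k}\left( s\right)
\right\} U(s,0)=U(t,0)^{\ast }\,U(t,s)\left\{ I\otimes Z_{k}\left( s\right)
\right\} U(t,s)^{\ast }\,U(t,0).
\end{eqnarray*}

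The key step is an adaptedness observation. The quadrature $Z_{k}\left(
s\right) $ acts nontrivially only on the bath factor $\mathfrak{F}_{\left[
0,s\right] }$ associated with the time interval $\left[ 0,s\right] $,
whereas $U(t,s)$ --- being generated by the increments $b_{\text{in},j}\left(
r\right) $ with $r\in \left[ s,t\right] $ --- acts only on $\mathfrak{h}$ and on
the complementary bath factor $\mathfrak{F}_{\left[ s,t\right] }$. Since $\mathfrak{F
}$ factorizes over time intervals as $\mathfrak{F}_{\left[ 0,s\right] }\otimes
\mathfrak{F}_{\left[ s,t\right] }\otimes \mathfrak{F}_{\left[ t,\infty \right) }$,
these two operators act on disjoint tensor factors and therefore commute, giving $U(t,s)\left\{ I\otimes Z_{k}\left( s\right)
\right\} U(t,s)^{\ast }=I\otimes Z_{k}\left( s\right) $. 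Hence
\begin{eqnarray*}
Y_{k}\left( s\right) =U(t,0)^{\ast }\left\{ I\otimes Z_{k}\left( s\right)
\right\} U(t,0)\qquad \text{for every }t\geq s.
\end{eqnarray*}

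With this in hand, take arbitrary indices $j,k$ and times $t,s\in \left[ 0,T
\right] $; without loss of generality $t\geq s$. The identity just
established yields $Y_{j}\left( t\right) =U(t,0)^{\ast }\left\{ I\otimes
Z_{j}\left( t\right) \right\} U(t,0)$ and $Y_{k}\left( s\right)
=U(t,0)^{\ast }\left\{ I\otimes Z_{k}\left( s\right) \right\} U(t,0)$, so
that
\begin{eqnarray*}
\left[ Y_{j}\left( t\right) ,Y_{k}\left( s\right) \right] =U(t,0)^{\ast
}\big( I\otimes \left[ Z_{j}\left( t\right) ,Z_{k}\left( s\right) \right]
\big) U(t,0)=0
\end{eqnarray*}
by (\ref{eq:Z_commutes}). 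Thus the family is commutative.

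The only genuine subtlety --- the point that deserves a careful statement ---
is the adaptedness argument that $U(t,s)$ commutes with $I\otimes
Z_{k}\left( s\right) $ for $s\leq t$, which rests on the continuous tensor
product structure of Fock space and on the fact that the QSDE defining $
U(t,s)$ only involves the noise increments after time $s$. Everything else
is bookkeeping with the flow property and the elementary observation (\ref
{eq:Z_commutes}) that the input quadratures commute at all times, since they
are built from the self-adjoint combinations $b_{\text{in},k}+b_{\text{in}
,k}^{\ast }$ integrated over $\left[ 0,t\right] $.
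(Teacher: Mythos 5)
Your proof is correct and follows essentially the same route as the paper: extend $Y_k(s)$ to a conjugation by $U(t,0)$ via the flow property and the adaptedness of $U(t,s)$ to the bath factor over $[s,t]$, then push the commutator onto $[Z_j(t),Z_k(s)]=0$. Your explicit appeal to the continuous tensor product factorization of Fock space just spells out the adaptedness step the paper states more briefly.
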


\begin{proof}
Suppose that $t\geq s$, then we observe that
\begin{eqnarray}
Y_{k}\left( s\right) =U\left( t,0\right) ^{\ast }\left\{ I\otimes
Z_{k}\left( s\right) \right\} U\left( t,0\right) .
\label{eq:Y_extend}
\end{eqnarray}
To see this, we use (\ref{eq:flow}) to write the right hand side as
\begin{eqnarray*}
U\left( s,0\right) ^{\ast }U\left( t,s\right) ^{\ast }\left\{ I\otimes
Z_{k}\left( s\right) \right\} U\left( t,s\right) U\left( s,0\right) 
\end{eqnarray*}
however $U\left( t,s\right) $ couples the system to only those input
processes over the time interval $s$ to $t$ and so commutes with $I\otimes
Z_{k}\left( t\right) $, and by unitarity $U\left( t,s\right) ^{\ast }U\left(
t,s\right) $ is the identity. We therefore have that
\begin{eqnarray*}
\left[ Y_{j}\left( t\right) ,Y_{k}\left( s\right) \right] =U\left(
t,0\right) ^{\ast }\left\{ I\otimes \left[ Z_{j}\left( t\right) ,Z_{k}\left(
s\right) \right] \right\} U\left( t,0\right) 
\end{eqnarray*}
which vanishes by (\ref{eq:Z_commutes}).
\end{proof}

\begin{proposition}
\label{prop:NDP_basic}
The non-demolition principle holds for $j_{t}\left( X\right) $ and the
measured observables $\big\{ Y_{k}\left( u\right) :k=1,\cdots ,n,0\leq
u\leq s\big\} $ provided $t\geq s$.
\end{proposition}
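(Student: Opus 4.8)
The plan is to verify directly the only thing Rule~\#2 demands in this setting, namely that $j_t(X)$ is compatible with each measured observable $Y_k(u)$, $0\le u\le s$. The key identity I would lean on is the ``time extension'' relation (\ref{eq:Y_extend}) already established above: for any $r\ge u$,
\[
Y_k(u)=U(r,0)^\ast\{I\otimes Z_k(u)\}U(r,0).
\]

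First I would fix $t\ge s$ and an arbitrary $u\in[0,s]$, so that $t\ge u$ as well, and apply (\ref{eq:Y_extend}) with $r=t$ to get $Y_k(u)=U(t,0)^\ast\{I\otimes Z_k(u)\}U(t,0)$. Since by definition $j_t(X)=U(t,0)^\ast\{X\otimes I\}U(t,0)$, both operators are conjugates of ``inner'' operators by the \emph{same} unitary $U(t,0)$. Second, I would write
\[
[\,j_t(X),\,Y_k(u)\,]=U(t,0)^\ast\big[\,X\otimes I,\ I\otimes Z_k(u)\,\big]U(t,0)
\]
and observe that the bracket on the right vanishes because $X\otimes I$ acts only on the system factor $\mathfrak{h}$ while $I\otimes Z_k(u)$ acts only on the bath factor $\mathfrak{F}$. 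This gives $[\,j_t(X),Y_k(u)\,]=0$ for all $k$ and all $u\le s$, which is the assertion; combined with Proposition~\ref{prop:self_ND_basic} it shows the whole family $\{j_t(X)\}\cup\{Y_k(u):k=1,\dots,n,\ 0\le u\le s\}$ is commutative.

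I do not expect a genuine obstacle here, since the real work is hidden in (\ref{eq:Y_extend}). The only subtlety worth flagging is the role of the hypothesis $t\ge s$: it is exactly what lets a single dilating unitary $U(t,0)$ represent $j_t(X)$ and every $Y_k(u)$ with $u\le s$ simultaneously. For $u>t$ the operator $Y_k(u)$ cannot be put in this form — writing $U(u,0)=U(u,t)U(t,0)$, the factor $U(u,t)$ entangles the system with bath modes on $[t,u]$ — and one would not expect $j_t(X)$ to commute with it; this is precisely why estimating a past observable from future output (smoothing) is the delicate case the paper is concerned with.
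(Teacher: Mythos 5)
Your proof is correct and follows essentially the same route as the paper: apply the extension identity (\ref{eq:Y_extend}) to write $Y_k(u)$ as a conjugation by the single unitary $U(t,0)$, and then observe that the inner commutator $[X\otimes I,\, I\otimes Z_k(u)]$ vanishes because the operators live on different tensor factors. Your version is, if anything, slightly more careful in treating a general $u\le s$ rather than just $u=s$, and your closing remark about why the argument fails for $u>t$ matches the paper's own discussion of why smoothing is problematic.
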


\begin{proof}
For $t\geq s$ we may use (\ref{eq:Y_extend}) again to show that
\begin{eqnarray*}
\left[ j_{t}\left( X\right) ,Y_{j}\left( s\right) \right]  =U\left(
t,0\right) ^{\ast }\left[ X\otimes I,I\otimes Z_{k}\left( s\right) \right]
U\left( t,0\right)  =0.
\end{eqnarray*}
\end{proof}

We may therefore estimate the present (filtered), or future (predicted),
value of a system observable based on the quadrature observations up to the
present time. Estimating past values (smoothing) is not possible as it
violates the non-demolition principle.

\bigskip 

The filter is well-known for this problem. The best estimate for $j_{t}\left( X\right) $ is 
the conditional expectation
\begin{eqnarray}
\pi_t (X) = \mathbb{E} [ j_t (X) | \mathfrak{Y}_{[0,t]} ],
\end{eqnarray}
where $\mathfrak{Y}_{[0,t]}$ is the algebra generated by $\big\{ Y_{k}\left( u\right) :k=1,\cdots ,n,0\leq
u\leq t\big\} $.

We may write $\pi _{t}\left( X\right) =tr\left\{ \hat{\rho}%
_{t}X\right\} $ where $\hat{\rho}_{t}$ satisfies the stochastic partial
differential equation (SPDE)
\begin{eqnarray}
d\hat{\rho}_{t} =i\left[ H,\hat{\rho}_{t}\right] dt+ \mathcal{D} ( \hat \rho_t ) \, dt + \sum_{k}\left( L_{k}\hat{\rho}_{t}+\hat{\rho}_{t}L_{k}^{\ast }-\lambda
_{k}\left( t\right) \hat{\rho}_{k}\right) \,dI_{k}\left( t\right) 
\label{eq:filter_basic}
\end{eqnarray}
where $\mathcal{D} ( \hat \rho_t ) =\sum_{k}\left( L_{k}%
\hat{\rho}_{t}L_{k}^{\ast }-\frac{1}{2}L_{k}^{\ast }L_{k}\hat{\rho}_{t}-%
\frac{1}{2}\hat{\rho}_{t}L_{k}^{\ast }L_{k}\right) dt$, $\lambda _{k}\left( t\right) =tr\left\{ \hat{\rho}_{t}\left(
L_{k}+L_{k}^{\ast }\right) \right\} $ and the $I_{k}\left( t\right) $ are
Wiener processes known as the \emph{innovations processes} and are given by
\begin{eqnarray}
dI_{k}\left( t\right) =dY_{k}\left( t\right) -\sqrt{\eta _{k}}\lambda
_{k}\left( t\right) dt,
\label{eq:innovations}
\end{eqnarray}
where we allow for an efficiency factor $\eta _{k}\in [ 0,1]$ for the $k$ th measurement. The derivation of the filter makes explicit use of Bayes Theorem.

We remark that the SPDE (\ref{eq:filter_basic}) is nonlinear in $\hat{\rho}_t$, however,it is advantageous to work with an unnormalized version which obeys a linear SDPE. It is possible to write $\hat{\rho}_t = \rho_t / \text{tr} \{ \rho+t \} $ where $\rho_t$ satisfies 
\begin{eqnarray}
d \rho_{t} =i\left[ H, \rho_{t}\right] dt+ \mathcal{D} (   \rho_t ) \, dt   +\sum_{k}\left( L_{k} \rho_{t}+ \rho_{t}L_{k}^{\ast } \right) \,dY_{k}\left( t\right) 
\label{eq:filter_zakai}
\end{eqnarray}

\begin{remark}
The filter $\hat \rho _t$ is to be obtain by solving the SPDE (\ref{eq:filter_basic}) subject to an initial condition, normally $\hat \rho_0 = \rho_0$.
The initial state $\rho_0$ may not always be known in which case guessing at one may lead to error. In many cases, the sensitivity to initial conditions 
is not important and a wrongly initialized filter will converge to the correct value asymptotically in time.
\end{remark}

\begin{remark}
From the readout $Y_k$ we may extract the values $\lambda _{k}\left( t\right) =tr\left\{ \hat{\rho}_{t}\left( L_{k}+L_{k}^{\ast }\right) \right\} $. This is colloquially phrased as ``measuring the observable $ L_{k}+L_{k}^{\ast }  $.'' However, what we are doing, in fact, is measuring the field quadrature $Y_k (t) $ and estimating the expected value of $ L_{k}+L_{k}^{\ast }  $ from this.
The phrase is not harmless! The $ L_{k}+L_{k}^{\ast }  $ need not commute for different $k$ and so, while arresting, it is misleading to say we are measuring noncommuting observables. In reality, Proposition \ref{prop:self_ND_basic} shows that the quadratures $Y_k (t)$ commute for all $k$ and all $t \ge 0$. So the totality of what we measure is compatible - however, we can use this to estimate noncommuting observables.
\end{remark}

\subsection{Computing Estimates}
Let us now sketch the argument for deriving the estimates. We adapt the presentation given in \cite{BvHJ}.

\begin{theorem}[Bouten-van Handel]
Let $\mathbb{E} = \text{tr} \{ \rho \cdot \}$ be a state on an algebra $\mathfrak{A}$ of operators and $\mathfrak{Y}$ a commutative subalgebra generated by the measured observables in a given experiment. Suppose that $\mathfrak{Y} = U^\ast \mathfrak{Z} U$ for a given unitary $U$. Furthermore, suppose that there is a fixed $F \in \mathfrak{Y}^\prime$ such that, for every $X \in \mathfrak{Y}^\prime$, we have
\begin{eqnarray}
\mathbb{E} [U^\ast XU ] = \mathbb{E} [ F^\ast XF ] .
\end{eqnarray}
Then
\begin{eqnarray}
\mathbb{E} [U^\ast X U  | \mathfrak{Y} ] = \frac{1}{\sigma (I) } \, \sigma (X) ,
\end{eqnarray}
where
\begin{eqnarray}
\sigma (X) = U^\ast \mathbb{E} [ F^\ast X F | \mathfrak{Z} ] U.
\end{eqnarray}
\end{theorem}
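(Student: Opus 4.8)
The plan is to recognise $\sigma(I)^{-1}\sigma(X)$ as the conditional expectation by checking the two properties that characterise it: it lies in $\mathfrak{Y}$, and it agrees with $U^\ast X U$ after pairing with an arbitrary element of $\mathfrak{Y}$ under $\mathbb{E}$. This is legitimate because $\mathfrak{Y}$ is commutative: by the spectral theorem it is isomorphic to $L^\infty$ of a probability space, so $\mathbb{E}[\,\cdot\mid\mathfrak{Y}]$ exists and $\mathbb{E}[A\mid\mathfrak{Y}]$ is the (a.s.\ unique) element $\hat A\in\mathfrak{Y}$ with $\mathbb{E}[\hat A\,C]=\mathbb{E}[A\,C]$ for all $C\in\mathfrak{Y}$; likewise $\mathbb{E}[\,\cdot\mid\mathfrak{Z}]\colon\mathfrak{Z}'\to\mathfrak{Z}$ exists and, directly from this testing property, enjoys the pull-out rule $\mathbb{E}[Bc\mid\mathfrak{Z}]=\mathbb{E}[B\mid\mathfrak{Z}]\,c$ and the identity $\mathbb{E}[A\,\mathbb{E}[B\mid\mathfrak{Z}]]=\mathbb{E}[\mathbb{E}[A\mid\mathfrak{Z}]\,\mathbb{E}[B\mid\mathfrak{Z}]]$ for $A,B\in\mathfrak{Z}'$, $c\in\mathfrak{Z}$. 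Throughout, $X$ and $F$ are taken so that all conditional expectations written down make sense --- $U^\ast X U$ must be compatible with $\mathfrak{Y}$ so that the estimate is meaningful, cf.\ Proposition~\ref{prop:NDP_basic}, and $F^\ast X F$, $F^\ast F$ must lie in $\mathfrak{Z}'$; in the filtering application $X$ is a system observable and $F$ the reference evolution, both commuting with the input algebra $\mathfrak{Z}$.

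\emph{Membership.} Since $\mathbb{E}[\,\cdot\mid\mathfrak{Z}]$ is $\mathfrak{Z}$-valued, $\sigma(X)=U^\ast\mathbb{E}[F^\ast X F\mid\mathfrak{Z}]U\in U^\ast\mathfrak{Z}U=\mathfrak{Y}$, and in particular $\sigma(I)\in\mathfrak{Y}$; commutativity of $\mathfrak{Y}$ then gives $\sigma(I)^{-1}\sigma(X)\in\mathfrak{Y}$, \emph{provided} $\sigma(I)$ is invertible. Since $\sigma(I)=U^\ast\mathbb{E}[F^\ast F\mid\mathfrak{Z}]U$ is a priori only nonnegative, establishing that it is almost surely strictly positive is the point I expect to require most care. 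The standard way round it is to prove first the \emph{unnormalised} identity $\mathbb{E}[U^\ast X U\mid\mathfrak{Y}]\,\sigma(I)=\sigma(X)$ --- an equality of bona fide elements of $\mathfrak{Y}$, with no inverse --- and only then divide, in exact parallel with the preference for the linear Zakai equation (\ref{eq:filter_zakai}) over the normalised filter (\ref{eq:filter_basic}).

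\emph{The testing identity.} By the above it suffices to show, for every $K\in\mathfrak{Y}$,
\[ \mathbb{E}\big[U^\ast X U\,\sigma(I)\,K\big]=\mathbb{E}\big[\sigma(X)\,K\big] \]
(replacing $K$ by $\sigma(I)^{-1}K$ recovers the normalised statement). Write $K=U^\ast\kappa U$ with $\kappa\in\mathfrak{Z}$, substitute the definitions of $\sigma(I)$ and $\sigma(X)$, use $UU^\ast=I$ to collapse the adjacent $U$'s, and use the pull-out rule to absorb $\kappa$ into the conditional expectations; both sides are now of the form $\mathbb{E}[U^\ast(\cdots)U]$ with $(\cdots)\in\mathfrak{Z}'$. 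Apply the hypothesis $\mathbb{E}[U^\ast(\cdot)U]=\mathbb{E}[F^\ast(\cdot)F]$ to each side, slide the $\mathfrak{Z}$-valued factors past $F$ (legitimate since $F\in\mathfrak{Z}'$), and invoke the identity $\mathbb{E}[A\,\mathbb{E}[B\mid\mathfrak{Z}]]=\mathbb{E}[\mathbb{E}[A\mid\mathfrak{Z}]\,\mathbb{E}[B\mid\mathfrak{Z}]]$; both sides then reduce to $\mathbb{E}\big[\mathbb{E}[F^\ast X F\mid\mathfrak{Z}]\,\mathbb{E}[F^\ast F\mid\mathfrak{Z}]\,\kappa\big]$, which coincide because $\mathfrak{Z}$ is commutative. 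Apart from the invertibility of $\sigma(I)$, the only delicate part is the bookkeeping of membership in $\mathfrak{Z}$ versus $\mathfrak{Z}'$ that keeps every conditional expectation and every pull-out legitimate; the rest is routine manipulation of conditional expectations.
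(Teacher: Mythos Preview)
Your argument is correct and uses the same underlying ingredients as the paper --- the characterising testing property of conditional expectation, the hypothesis $\mathbb{E}[U^\ast(\cdot)U]=\mathbb{E}[F^\ast(\cdot)F]$ to exchange $U$ for $F$, and commutation of $F$ with $\mathfrak{Z}$ to slide factors --- but the organisation differs. The paper splits the computation into two pieces: first the change-of-variables identity $\mathbb{E}[U^\ast X U\mid\mathfrak{Y}]=U^\ast\,\mathbb{E}'[X\mid\mathfrak{Z}]\,U$, where $\mathbb{E}'=\text{tr}\{U\rho U^\ast\,\cdot\,\}$ is the evolved state; and second a Kallianpur--Striebel-type formula $\mathbb{E}'[X\mid\mathfrak{Z}]=\mathbb{E}[F^\ast F\mid\mathfrak{Z}]^{-1}\,\mathbb{E}[F^\ast X F\mid\mathfrak{Z}]$, obtained by testing against $Z\in\mathfrak{Z}$ and using the hypothesis as a Radon--Nikodym relation between $\mathbb{E}'$ and $\mathbb{E}$. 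You instead verify the characterising identity for $\mathbb{E}[\,\cdot\mid\mathfrak{Y}]$ in one pass, without introducing $\mathbb{E}'$. Your route is a little more direct and makes the unnormalised version (avoiding the $\sigma(I)^{-1}$ issue) cleaner; the paper's route makes the reference-probability / change-of-measure structure more transparent. Neither addresses the invertibility of $\sigma(I)$ beyond assuming it, and your caution there is well placed.
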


For completeness, we sketch the main argument of the proof: wWe first note that
\begin{eqnarray}
\mathbb{E} [U^\ast X U | \mathfrak{Y} ] = U^\ast \mathbb{E}^\prime [X| \mathfrak{Z} ] U ,
\label{eq:part1}
\end{eqnarray}
where $\mathbb{E}^\prime$ is the expectation with respect to the state $\rho^\prime = U \rho U^\ast$, then, with $Z \in \mathfrak{Z}^\prime$, we see that
\begin{eqnarray*}
\mathbb{E} \big[ \mathbb{E} [ F^\ast X F | \mathfrak{Z} ] Z \big] &=& \mathbb{E} [F^\ast XF Z]
= \mathbb{E} [F^\ast X ZF] = \mathbb{E}^\prime  [ X Z]\nonumber \\
&=&  \mathbb{E}^\prime   \big[ \mathbb{E}^\prime [X | \mathfrak{Z} ]  Z \big] \nonumber \\
&=&  \mathbb{E}    \big[ F^\ast \mathbb{E}^\prime [X | \mathfrak{Z} ]  Z F \big] \nonumber \\
&=&  \mathbb{E}    \big[ F^\ast F\mathbb{E}^\prime [X | \mathfrak{Z} ]  Z \big] \nonumber \\
&=&  \mathbb{E}    \bigg[ \mathbb{E} \big[ F^\ast F \mathbb{E}^\prime [X | \mathfrak{Z} ]  Z | \mathfrak{Z} \big] \bigg] \nonumber \\
&=&  \mathbb{E}    \bigg[ \mathbb{E} \big[ F^\ast F  | \mathfrak{Z} \big] \mathbb{E}^\prime [X | \mathfrak{Z} ] Z \bigg] ;\nonumber \\
\end{eqnarray*}
and from this, we deduce that
\begin{eqnarray}
\mathbb{E}^\prime [X | \mathfrak{Z} ]
=\mathbb{E} \big[ F^\ast F  | \mathfrak{Z} \big]^{-1} 
\, \mathbb{E} \big[ F^\ast X F  | \mathfrak{Z} \big] .
\label{eq:part2}
\end{eqnarray}
Combining (\ref{eq:part1}) and (\ref{eq:part2}) gives the desired result.

To see how to apply this to the filtering problem, let us first of all note that 
the algebra $\mathfrak{Y}_{[0,t]}$ generated by the output quadratures over the time interval $[0,t]$ may be written as
$ U(t)^\ast [ I \otimes \mathfrak{Z}_{[0,t]} ] U(t)$ by virtue of (\ref{eq:Y_extend}) from Proposition \ref{prop:self_ND_basic}. 
We next observe that we may write
\begin{eqnarray*}
\mathbb{E} [ U(t,0)^\ast [ X \otimes I ] U(t,0) ] = 
\mathbb{E} [ F(t,0)^\ast [ X \otimes I ] F(t,0) ] 
\end{eqnarray*}
where we introduce the two-parameter family of time-ordered exponentials
\begin{eqnarray}
F(t_2 , t_1 ) = \vec{T} e^{\int_{t_1}^{t_2} [ \sum_k L_k dZ_k (t) + K dt ] } ,
\label{eq:F_Z}
\end{eqnarray}
where $K= - \frac{1}{2} \sum_k L^\ast_k L_k -iH$.
Here the state is a tensor product of system and Fock vacuum state of the bath: $\rho_{\text{sys}} \otimes | \Omega \rangle \langle \Omega |$. Indeed it is easy to show that
\[
U(t_2 , t_1 ) \, |\psi \rangle \otimes | \Omega \rangle = 
F(t_2 , t_1 ) \, |\psi \rangle \otimes | \Omega \rangle ,
\]
for any system vector $| \psi \rangle $. 

We use the Theorem to write
\begin{eqnarray}
\pi_t (X) =  \mathbb{E} [ j_t (X) | \mathfrak{Y}_{[0,t]} ]  
= \sigma_t (I)^{-1} \, \sigma_t (X) ,
\end{eqnarray}
with
\begin{eqnarray}
\sigma_t (X) = \text{tr} \big\{ \rho_{\text{sys}} F^Y (t,0)^\ast X F^Y (t,0) \big\} ,
\end{eqnarray}
and where we introduce the new process
\begin{eqnarray}
F^Y(t_2 , t_1 ) = \vec{T} e^{\int_{t_1}^{t_2} [ \sum_k L_k dY_k (t) + K dt ] } ,
\label{eq:F_Y}
\end{eqnarray}

It is not difficult to see that $\sigma_t (X)$ satisfies the stochastic differential equation  
\begin{eqnarray}
d \sigma_t (X) = \sigma_t ( \mathcal{L} X ) \, dt + \sum_k \sigma_t (XL_k + L^\ast_k X ) \, dY_k (t) .
\label{eq:BZ}
\end{eqnarray}
where the (Heisenberg picture) Lindbladian is 
\begin{eqnarray}
\mathcal{L} X = \sum_k L^\ast_k XL_k + X K + K^\ast X.
\label{eq:Lindblad}
\end{eqnarray}
This is equivalent to (\ref{eq:filter_basic}) for the unit efficiency case. 
In the classical world, we encounter the Zakai equation which is a \textit{linear} SPDE for the unnormalized filter which occurs in continuous time estimation problems. The equation \ref{eq:BZ} is
its quantum analogue and we refer to it as the \emph{Belavkin-Zakai equation}.

\section{Intervention Measurements}

The proposal was made by Gammelmark \textit{et alia} that, in addition to
the quadrature measurements over the time interval $\left[ 0,T\right] $, one
might also make a further measurement at an intermediate time $\tau $ $%
\left( 0<\tau <T\right) $. The question we ask here is whether we can
actually do this without violating the non-demolition principle.

\bigskip 

If, rather than an instantaneous measurement at time $\tau $, we are
prepared to settle for another indirect measurement then there is a simple
way to do this. The additional measurement is realized as a quadrature
measurement of an extra input field and so we just have the old filtering
problem with a larger number of channels.

\bigskip 

The issue, of course, is that we want to make a measurement at a fixed time $\tau $. It is clear that if this is a direct measurement of some observable $A$ of the system then what in practice we are doing is measuring $j_{\tau }\left( A\right) $ which we know commutes with the output quadratures up to time $\tau $, but not beyond. This necessitates using indirect measurements.
To this end we introduce an auxiliary system with Hilbert space $\mathfrak{H}$ which will act as the probe. For simplicity, we assume that the auxiliary system is in an initial state $\rho_{\text{probe}} $ and has no internal evolution. We now work on the tensor product $\mathfrak{h}\otimes \mathfrak{F}\otimes \mathfrak{H}$. and have the same coupling between the system and bath as
before, no coupling between that bath and probe, and no interaction between the system and probe save for a unitary $V$ which is applied at time $\tau $. We understand that $\left[ V,I\otimes F\otimes I\right] =0$ for all bath operators $F$, however, $V$ must lead to a nontrivial coupling between the system and probe so that we may obtain any information about the system by
measuring the probe.

The unitary dynamics on $\mathfrak{h}\otimes \mathfrak{F}\otimes \mathfrak{H}$ is now described by
\begin{eqnarray}
\widetilde{U}\left( t,0\right) =\left\{ 
\begin{array}{ll}
U ( t,0 ) \otimes I , & t < \tau ; \\ 
\left[ U ( t ,\tau  ) \otimes I \right]  V  [ U ( \tau , 0 ) \otimes I], & t\geq \tau .
\end{array}
\right. 
\end{eqnarray}
We also set
\begin{eqnarray*}
\widetilde{U}\left( t,s\right) =\widetilde{U}\left( t,0\right) \widetilde{U}\left(
s,0\right) ^{\ast }\quad \left( t\geq s\right) .
\end{eqnarray*}
(The two-parameter family $\widetilde{U}(t,s)$ therefore satisfies the flow property.)

The output quadratures are now
\begin{eqnarray}
\widetilde{Y}_{k}\left( t\right) =\widetilde{U}\left( t,0\right) ^{\ast }\left[
I\otimes Z_{k}\left( t\right) \otimes I\right] \widetilde{U}\left( t,0\right) .
\end{eqnarray}
Likewise, the observable of the probe which is measured will be of the form
\begin{eqnarray}
\widetilde{M}\left( \tau \right)  &=&\widetilde{U}\left( \tau ,0\right) ^{\ast }%
\left[ I\otimes I\otimes M\right] \widetilde{U}\left( \tau ,0\right)  \nonumber \\
&=&[ U(\tau , 0 ) \otimes I]^\ast V^{\ast }\left[ I\otimes I\otimes M\right] V [ U(\tau , 0 ) \otimes I] ,\nonumber\\
&& \,
\label{eq:tilde_M}
\end{eqnarray}
for some fixed self-adjoint operator $M$ on $\mathfrak{H}$. Finally, system observables will have the modified evolution
\begin{eqnarray}
\widetilde{\jmath}_{t}\left( X\right) =\widetilde{U}\left( t,0\right) ^{\ast }\left[
X\otimes I\otimes I\right] \widetilde{U}\left( t,0\right) .
\label{eq:tilde_j}
\end{eqnarray}

\begin{proposition}
\label{prop:tilde_Y_commute}
The family $\big\{ \widetilde{Y}_{k}\left( t\right) :k=1,\cdots ,n,0\leq t\leq
T\big\} $ is a commutative family.
\end{proposition}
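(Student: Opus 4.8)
The plan is to mimic the proof of Proposition \ref{prop:self_ND_basic}, the only new ingredient being the insertion of the probe unitary $V$ at time $\tau$. First I would establish the analogue of the extension identity (\ref{eq:Y_extend}): for $t\geq s$,
\begin{eqnarray*}
\widetilde{Y}_{k}\left( s\right) =\widetilde{U}\left( t,0\right) ^{\ast }\left[ I\otimes Z_{k}\left( s\right) \otimes I\right] \widetilde{U}\left( t,0\right) .
\end{eqnarray*}
Using the flow property $\widetilde{U}(t,0)=\widetilde{U}(t,s)\widetilde{U}(s,0)$ this reduces to showing $\widetilde{U}(t,s)$ commutes with $I\otimes Z_{k}(s)\otimes I$. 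Here I split into cases according to where $s$ and $t$ sit relative to $\tau$: if $t\leq\tau$ or $s\geq\tau$ then $\widetilde{U}(t,s)$ is just $U(t,s)\otimes I$, which couples the system only to bath modes on $[s,t]$ and hence commutes with $Z_k(s)$ exactly as before; if $s<\tau\leq t$ then $\widetilde{U}(t,s)=[U(t,\tau)\otimes I]\,V\,[U(\tau,s)\otimes I]$, and each factor commutes with $I\otimes Z_k(s)\otimes I$ — the two $U$ factors because they involve only bath modes on $[\tau,t]$ and $[s,\tau]$ respectively (both disjoint from or consistent with the interval $[0,s]$ defining $Z_k(s)$; more precisely $Z_k(s)$ involves modes on $[0,s]$ and $U(\tau,s)$ involves modes on $[s,\tau]$, which overlap only on a null set), and the factor $V$ because $[V,I\otimes F\otimes I]=0$ for all bath operators $F$ by hypothesis.

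With the extension identity in hand, the conclusion is immediate: for $t\geq s$,
\begin{eqnarray*}
\left[ \widetilde{Y}_{j}\left( t\right) ,\widetilde{Y}_{k}\left( s\right) \right] =\widetilde{U}\left( t,0\right) ^{\ast }\left[ I\otimes \left[ Z_{j}\left( t\right) ,Z_{k}\left( s\right) \right] \otimes I\right] \widetilde{U}\left( t,0\right) ,
\end{eqnarray*}
which vanishes by (\ref{eq:Z_commutes}). Conjugation by the unitary $\widetilde{U}(t,0)$ is an algebra homomorphism, so it carries the commuting family $\{I\otimes Z_k(t)\otimes I\}$ to a commuting family.

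The one subtlety I would be careful about — and what I expect to be the main obstacle, such as it is — is the bookkeeping of which bath modes each factor of $\widetilde{U}(t,s)$ actually touches when $s<\tau\leq t$, so that the commutation with $Z_k(s)$ is genuinely justified rather than asserted. In particular I want $U(\tau,s)$ to commute with $Z_k(s)$; this holds because $U(\tau,s)$ is built from $b_{\text{in},k}(r),b_{\text{in},k}(r)^{\ast}$ with $r\in[s,\tau]$ while $Z_k(s)$ is built from those with $r\in[0,s]$, and the singular commutators vanish for distinct times. Everything else is a routine transcription of the earlier argument with the extra $\otimes I$ for the probe factor and the single observation that $V$ is bath-trivial.
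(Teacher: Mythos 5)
Your proof is correct and follows essentially the same route as the paper: establish the extension identity $\widetilde{Y}_{k}(s)=\widetilde{U}(t,0)^{\ast}[I\otimes Z_{k}(s)\otimes I]\widetilde{U}(t,0)$ via the flow property, check case by case (according to the position of $\tau$ relative to $s$ and $t$) that $\widetilde{U}(t,s)$ commutes with $I\otimes Z_{k}(s)\otimes I$ — using that $V$ acts trivially on the bath — and then reduce everything to the vanishing commutator $[Z_{j}(t),Z_{k}(s)]=0$. Your bookkeeping of which bath modes each factor touches is, if anything, slightly more careful than the paper's (whose case labels contain a typo), so there is nothing to correct.
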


\begin{proof}
Suppose that $t\geq s$, then similar to (\ref{eq:Y_extend}) we observe that
\begin{eqnarray}
\widetilde{Y}_{k}\left( s\right) =\widetilde{U}\left( t,0\right) ^{\ast }\left\{
I\otimes Z_{k}\left( s\right) \otimes I\right\} \widetilde{U}\left( t,0\right) .
\label{eq:tilde_Y_extend}
\end{eqnarray}
The right hand side is
\begin{eqnarray*}
\tilde U \left( t,0\right) ^{\ast } \tilde U\left( t,s\right) ^{\ast }\left\{ I\otimes
Z_{k}\left( s\right) \otimes I\right\} \tilde U \left( t,s\right) \tilde U\left( t,0\right)
.
\end{eqnarray*}
For $t\geq s\geq \tau $, the unitary $U\left( t,s\right) $ couples only the input processes over the time interval $s$ to $t$ to the system and so commutes with $I\otimes Z_{k}\left( t\right) \otimes I$. If $t\leq \tau \leq s$, then we have an extra factor $V$ to worry about, but this only couples the system and probe and so again commutes with $I\otimes Z_{k}\left( s\right) \otimes I
$. Finally, if $\tau >t\geq s$, then we are back in the situation of Proposition \ref{prop:self_ND_basic}. We therefore have that $\left[ \widetilde{Y}_{j}\left( t\right) ,\widetilde{Y}_{k}\left( s\right) \right] $ equals
\begin{eqnarray*}
\widetilde{U}\left( t,0\right) ^{\ast }\left\{ I\otimes \left[ Z_{j}\left(
t\right) ,Z_{k}\left( s\right) \right] \otimes I\right\} \widetilde{U}\left(
t,0\right)  =0.
\end{eqnarray*}
\end{proof}

\begin{proposition}
\label{prop:tilde_M_Y_commute}
The observable $\widetilde{M}\left( \tau \right) $ commutes with the quadrature observables $\big\{ \widetilde{Y}_{k}\left( t\right) :k=1,\cdots , \, n,0\leq t\leq
T\big\} $.
\end{proposition}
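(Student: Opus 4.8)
The plan is to split the argument according to whether the output quadrature $\widetilde{Y}_k(t)$ we are comparing against $\widetilde{M}(\tau)$ is recorded before or after the intervention time $\tau$. The natural tool is the extension identity (\ref{eq:tilde_Y_extend}) from Proposition \ref{prop:tilde_Y_commute}, which lets me conjugate $\widetilde{Y}_k(t)$ back to $\widetilde{U}(\tau,0)^\ast[I\otimes Z_k(t)\otimes I]\widetilde{U}(\tau,0)$ whenever $t\le\tau$, and against which $\widetilde{M}(\tau)$ is already expressed in exactly the conjugated-by-$\widetilde{U}(\tau,0)$ form in (\ref{eq:tilde_M}).

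First I would treat the case $t\le\tau$. Using (\ref{eq:tilde_Y_extend}) with the role of the ``outer time'' played by $\tau$, write $\widetilde{Y}_k(t)=\widetilde{U}(\tau,0)^\ast[I\otimes Z_k(t)\otimes I]\widetilde{U}(\tau,0)$, and recall from (\ref{eq:tilde_M}) that $\widetilde{M}(\tau)=\widetilde{U}(\tau,0)^\ast[I\otimes I\otimes M]\widetilde{U}(\tau,0)$. Then $[\widetilde{M}(\tau),\widetilde{Y}_k(t)]=\widetilde{U}(\tau,0)^\ast\big[I\otimes I\otimes M,\,I\otimes Z_k(t)\otimes I\big]\widetilde{U}(\tau,0)=0$, since $M$ acts on $\mathfrak{H}$ and $Z_k(t)$ acts on $\mathfrak{F}$ and these live on distinct tensor factors.

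Next I would treat the case $t>\tau$, which is the part I expect to be the main obstacle, since here the quadrature has genuinely been transported through the intervention $V$ and through the post-$\tau$ coupling. The idea is still to conjugate everything back to time $\tau$: by (\ref{eq:tilde_Y_extend}) I may write $\widetilde{Y}_k(t)=\widetilde{U}(t,0)^\ast[I\otimes Z_k(t)\otimes I]\widetilde{U}(t,0)$, and factor $\widetilde{U}(t,0)=[U(t,\tau)\otimes I]\,V\,[U(\tau,0)\otimes I]$. Conjugating $\widetilde{M}(\tau)$ by the same $\widetilde{U}(t,0)$ (equivalently, noting $\widetilde{M}(\tau)=\widetilde{U}(t,0)^\ast\widetilde{U}(t,\tau)[I\otimes I\otimes M]\widetilde{U}(t,\tau)^\ast\widetilde{U}(t,0)$ and using that $\widetilde{U}(t,\tau)=U(t,\tau)\otimes I$ commutes with $I\otimes I\otimes M$ because it touches only $\mathfrak{h}\otimes\mathfrak{F}$), I get $\widetilde{M}(\tau)=\widetilde{U}(t,0)^\ast[I\otimes I\otimes M]\widetilde{U}(t,0)$ as well. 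Hence $[\widetilde{M}(\tau),\widetilde{Y}_k(t)]=\widetilde{U}(t,0)^\ast[I\otimes I\otimes M,\,I\otimes Z_k(t)\otimes I]\widetilde{U}(t,0)=0$ for the same tensor-factor reason. Combining the two cases gives the claim; the key realizations are that $\widetilde{M}(\tau)$ can be written with \emph{either} $\widetilde{U}(\tau,0)$ or $\widetilde{U}(t,0)$ as the conjugating unitary (because the intervening block $U(t,\tau)\otimes I$ leaves $M$ alone), and that $M$ and $Z_k$ always commute before any conjugation since they occupy orthogonal factors of $\mathfrak{h}\otimes\mathfrak{F}\otimes\mathfrak{H}$.
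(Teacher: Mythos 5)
Your proposal is correct and follows essentially the same route as the paper: the case $t\le\tau$ uses (\ref{eq:tilde_Y_extend}) to pull $\widetilde{Y}_k(t)$ back to a conjugation by $\widetilde{U}(\tau,0)$, and the case $t\ge\tau$ pushes $\widetilde{M}(\tau)$ forward to a conjugation by $\widetilde{U}(t,0)$ via the observation that $\widetilde{U}(t,\tau)=U(t,\tau)\otimes I$ commutes with $I\otimes I\otimes M$, which is exactly the paper's identity (\ref{eq:tilde_M_extend}). In both cases the commutator reduces to $[I\otimes I\otimes M,\,I\otimes Z_k(t)\otimes I]=0$ on distinct tensor factors, as in the paper.
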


\begin{proof}
We separate into two cases. First, for $t \le \tau$ we use (\ref{eq:tilde_Y_extend}) to write
\begin{eqnarray}
\widetilde{Y}_{k}\left( t\right) =\widetilde{U}\left( \tau ,0\right) ^{\ast }\left\{
I\otimes Z_{k}\left( t\right) \otimes I\right\} \widetilde{U}\left( \tau ,0\right) ,
\end{eqnarray}
which obviously commutes with $\widetilde{M}\left( \tau \right) $ as given by (\ref{eq:tilde_M}).

Next, for $t\geq \tau $, let us first note that $\widetilde{U}\left( t,\tau \right) 
= U(t , \tau ) \otimes I$ couples the system to the bath processes
over the interval $[\tau ,t]$ and so commutes with $I \otimes I \otimes M $ giving
\begin{eqnarray}
\widetilde{M}\left( \tau \right)  &=&\widetilde{U}\left( \tau ,0\right) ^{\ast }%
\widetilde{U}\left( t,\tau \right) ^{\ast }\left[ I\otimes I\otimes M\right] 
\widetilde{U}\left( t,\tau \right) \widetilde{U}\left( \tau ,0\right)  \nonumber \\
&=&\widetilde{U}\left( t,0\right) ^{\ast }\left[ I\otimes I\otimes M\right] 
\widetilde{U}\left( t,0\right) .
\label{eq:tilde_M_extend}
\end{eqnarray}
We then see that $\left[ \widetilde{M}\left( \tau \right) ,\widetilde{Y}_{j}\left( t\right) \right] $ equals
\begin{eqnarray*}
\widetilde{U}\left( t,0\right) ^{\ast }\left[ I\otimes I\otimes M,I\otimes
Z_{k}\left( t\right) \otimes I\right] \widetilde{U}\left( t,0\right) 
= 0.
\end{eqnarray*}
\end{proof}

\bigskip 

We therefore have that \textit{all} the measurements are of compatible observables.
The question then is what we can now hope to estimate with them.

\begin{proposition}
\label{prop:tilde_jmath_M_Y_commute}
The observable $\widetilde{\jmath}_{t}\left( X\right) $ commutes with
quadratures $\left\{ \widetilde{Y}_{k}\left( u\right) :k=1,\cdots ,n,0\leq u\leq
t\right\} $ and the observable $\widetilde{M}\left( \tau \right) $ provided $%
t\geq \tau $.
\end{proposition}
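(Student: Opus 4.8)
The plan is to reduce everything to the two extension identities already in hand, (\ref{eq:tilde_Y_extend}) and (\ref{eq:tilde_M_extend}), which let us express all of the relevant observables as conjugates by one and the same unitary $\widetilde U(t,0)$; once that is done, the commutators collapse to commutators of operators supported on distinct tensor factors of $\mathfrak h\otimes\mathfrak F\otimes\mathfrak H$, which vanish.

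Concretely: by definition (\ref{eq:tilde_j}), $\widetilde{\jmath}_t(X)=\widetilde U(t,0)^\ast[X\otimes I\otimes I]\widetilde U(t,0)$. Since each $u\le t$, the identity (\ref{eq:tilde_Y_extend}) gives $\widetilde Y_k(u)=\widetilde U(t,0)^\ast[I\otimes Z_k(u)\otimes I]\widetilde U(t,0)$; and since $t\ge\tau$, the identity (\ref{eq:tilde_M_extend}) gives $\widetilde M(\tau)=\widetilde U(t,0)^\ast[I\otimes I\otimes M]\widetilde U(t,0)$. Because conjugation by a fixed unitary preserves commutators, I would then write
\[
\big[\widetilde{\jmath}_t(X),\widetilde Y_k(u)\big]=\widetilde U(t,0)^\ast\big[X\otimes I\otimes I,\,I\otimes Z_k(u)\otimes I\big]\widetilde U(t,0),
\]
together with the analogous expression carrying $I\otimes I\otimes M$ in place of $I\otimes Z_k(u)\otimes I$ for $[\widetilde{\jmath}_t(X),\widetilde M(\tau)]$. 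Each inner commutator is zero because $X$ acts only on $\mathfrak h$, $Z_k(u)$ only on $\mathfrak F$, and $M$ only on $\mathfrak H$.

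There is no genuine obstacle here: all the real work was done in Propositions \ref{prop:tilde_Y_commute} and \ref{prop:tilde_M_Y_commute}, whose proofs already absorbed the awkward factor $V$ and the time ordering. The one point to watch is the bookkeeping of hypotheses that makes the single-conjugation representation legitimate. The expression for $\widetilde Y_k(u)$ requires $u\le t$, exactly as in the proof of Proposition \ref{prop:tilde_Y_commute}, and holds whether or not $u$ lies before or after $\tau$. The expression for $\widetilde M(\tau)$ through $\widetilde U(t,0)$, on the other hand, is precisely (\ref{eq:tilde_M_extend}) and rests on $t\ge\tau$, since only then can the extra factor $\widetilde U(t,\tau)=U(t,\tau)\otimes I$ be inserted next to $I\otimes I\otimes M$ without effect. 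Thus the standing assumption $t\ge\tau$ is exactly what lets $\widetilde{\jmath}_t(X)$, the $\widetilde Y_k(u)$, and $\widetilde M(\tau)$ all be pulled back to a common conjugating unitary, after which their compatibility is immediate.
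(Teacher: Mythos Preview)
Your proof is correct and follows essentially the same approach as the paper: for the quadrature part you spell out explicitly the ``virtually identical to Proposition~\ref{prop:NDP_basic}'' step via (\ref{eq:tilde_Y_extend}), and for $\widetilde M(\tau)$ you invoke (\ref{eq:tilde_M_extend}) exactly as the paper does, in both cases reducing to a commutator of operators on distinct tensor factors under a common conjugation by $\widetilde U(t,0)$.
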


\begin{proof}
The commutativity of $\widetilde{\jmath}_{t}\left( X\right) $ with the
quadratures up to and including time $t$ follows from an argument virtually
identical to that in Proposition \ref{prop:NDP_basic}. Next, for $t\geq \tau $, we use (\ref{eq:tilde_M_extend})
to write $\left[ \widetilde{\jmath}_{t}\left( X\right) ,\widetilde{M}\left( \tau \right) \right] $ as
\begin{eqnarray*}
\widetilde{U}\left( t,0\right) ^{\ast }\left[ X\otimes I\otimes
I,I\otimes I\otimes M\right] \widetilde{U}\left( t,0\right)  =0.
\end{eqnarray*}
\end{proof}

\bigskip 

Outside of this, there are no natural constraints forcing any of the various
observables under consideration to commute.

\subsection{Summary}
The continuously monitored quadratures $\big\{ \widetilde{Y}_k (t) : k=1, \cdots , n , \, t \in [t_1,t_2] \big\}$ generate a commutative (von Neumann) algebra for all $t_1 < t_2$ which we denote as $\widetilde{\mathfrak{Y}}_{[t_1,t_2]}$. This is the \textit{algebra of the quadrature observations} over the time interval $[ t_1 , t_2 ]$.We have the isotonic condition
\begin{eqnarray}
\widetilde{\mathfrak{Y}}_{[t_1,t_2]} \subset \widetilde{\mathfrak{Y}}_{[t_3,t_4]}
\label{eq:isotony}
\end{eqnarray}
whenever the interval $[t_1, t_2]$ is contained inside $[t_3, t_4]$.
The isotony condition implies that $ \big\{  \widetilde{\mathfrak{Y}}_{[0,t] } : 0 \le t \le T \big\}$ is a nested family of commutative algebras, known as a \textit{filtration}.

The algebra generated by $ \widetilde{\mathfrak{Y}}_{[0,T]}$ and the additional observable $\widetilde{M} (\tau )$ is again a commutative algebra and this is what we intend to condition into.

To find the best estimate of an observable $X$ of the system at time $t \in [0,T]$, i.e., estimate $\widetilde{\jmath} _t (X)$, we use the observations of the quadratures up to and including time $t$ and additionally $\widetilde{M} (\tau )$ if $t \ge \tau$.

The best estimate will have the form $\widetilde{\pi}_t (X) = \text{tr} \{ \widetilde{\rho}_t X \}$ where
\begin{eqnarray}
\widetilde{\rho}_t =
\hat \rho_t , \qquad ( t<\tau )
\end{eqnarray}
where $\hat \rho_t$ is the solution to the basic filter SPDE (\ref{eq:filter_basic}) with initial condition $\rho_0$ at time $0$;
\begin{eqnarray}
\widetilde{\rho}_\tau = \frac{1}{ \text{tr} \big\{ \Phi_m (\hat \rho_\tau )\big\} }\, \Phi_m ( \hat \rho_\tau  ) ; 
\label{eq:widetilde_rho_tau}
\end{eqnarray}
and, for times $t \ge \tau$, $\widetilde{\rho}_t$ will be the solution to the basic filter SPDE (\ref{eq:filter_basic}) with initial condition $\widetilde{\rho}_\tau$ as computed in (\ref{eq:widetilde_rho_tau}) now initialized at time $\tau$.

\begin{remark}
At this stage it is fairly obvious that we may extend the theory to multiple intervention measures at times $ \tau_1 < \tau_2 < \cdots < \tau_r$ during the time interval $[0,T]$. The procedure is simple enough: we use fresh probe systems at each intervention to ensure that all measurement are compatible; we use the basic filter SPDE (\ref{eq:filter_basic}) to propagate the filter starting with state $\rho_0$ at time 0, then update the state by 
\begin{eqnarray}
\widetilde{\rho}_{ \tau_k^-} \mapsto
\widetilde{\rho}_{ \tau_k^+}  =\frac{1}{ \text{tr} \big\{ \Phi^{(k)}_{(m_k)} ( \widetilde{\rho}_{ \tau_k^-} )\big\} }\, \Phi^{(k)}_{(m_k)}(\widetilde{\rho}_{ \tau_k^-}   ) 
\end{eqnarray}
at each time $\tau_k$ where $m_k$ is the recorded value at the $k$ measurement (described by the CP maps $\Phi^{(k)}_{(m_k)}$).
\end{remark}

\subsection{Estimating what the probe measures}
In their paper, Gammelmark et al. \cite{Gammelmark13} consider the situation where the result of the probe measurement at time $\tau$ is not used - locked in a safe deposit box until after the quadrature measurement period $[0,T]$ is over. They then ask as for the probability that the intervention measurement was value $m$ conditional on the whole history of the monitored quadratures.

In our language, we would say that they are estimating the observable $\widetilde{M} (\tau )$ using the quadrature measurements, or equivalently conditioning the observable $\widetilde{M} (\tau )$ onto $ \widetilde{\mathfrak{Y}}_{[0,T]}$.

By Proposition \ref{prop:tilde_M_Y_commute}, this is possible! Let us take the spectral decomposition $M = \sum_m m P_m$, then the goal is to compute
\begin{eqnarray}
p (m , \tau) =
\mathbb{E} [ \tilde{U} (\tau , 0 ) ^\ast [ I \otimes I \otimes   P_m ] \tilde{U} (\tau ,0) \, | \tilde{\mathfrak{Y}}_{[0,T]} ] .
\end{eqnarray}
This may be rewritten as
\begin{eqnarray}
\tilde{U} (\tau , 0 ) \mathbb{E}_{[0,T]} [  [ I \otimes I \otimes   P_m ]  \, | I \otimes \tilde{\mathfrak{Z}}_{[0,T]} 
\otimes I] \tilde{U} (\tau ,0),
\end{eqnarray}
where $\mathbb{E}_{[0,T]}[A]$ is the expectation $ \mathbb{E} [ \tilde{U} (\tau , 0 )^\ast  A \tilde{U} (\tau ,0) ]$.

Now if we fix arbitrary states $| \psi \rangle $ and $\vert \varphi \rangle$ for the probe we have that
\begin{eqnarray*}
\tilde{U} (T,0) \, | \psi \rangle \otimes \vert \Omega \rangle \otimes | \varphi \rangle &=&
\tilde{U} (T,\tau) V U ( \tau , 0 ) \, | \psi \rangle \otimes \vert \Omega \rangle \otimes | \varphi \rangle \nonumber \\
&=& \tilde{F } (T,\tau) V \tilde{F} ( \tau , 0 ) \, | \psi \rangle \otimes \vert \Omega \rangle \otimes | \varphi \rangle 
\end{eqnarray*}
where
$ \tilde{F} (t_2 , t_1 ) = F (t_2 , t_1 ) \otimes I$; compare (\ref{eq:F_Z}).

Using the Theorem, we may write
\begin{eqnarray}
p (m , \tau ) &= & \frac{ \text{tr} \bigg\{ \rho_{\text{sys}}    \otimes \rho_{\text{probe}} \,
[F^Y ( \tau , 0 )^\ast \otimes I]   V^\ast [F^Y ( T , \tau ) \otimes I ] [I \otimes P_m ] 
[ F^Y (T, \tau ) \otimes I ] V [F^Y ( \tau , 0 ) \otimes I] \bigg\} }
{ \text{tr} \bigg\{ \rho_{\text{sys}} \otimes \rho_{\text{probe}} \,
[F^Y ( \tau , 0 )^\ast \otimes I] V^\ast [F^Y ( T , \tau )\otimes I] [I \otimes I ] 
[F^Y (T, \tau ) \otimes I] V   \bigg\} } \nonumber \\
&=& \frac{ \text{tr} \bigg\{ \hat{\rho}_{\text{sys}} (\tau )\otimes \rho_{\text{probe}} \,
 V^\ast [E^Y ( T , \tau ) \otimes   P_m ] 
  V   \bigg\} }
{ \text{tr} \bigg\{  \hat{\rho}_{\text{sys}}  (\tau ) \otimes \rho_{\text{probe}} \,
  V^\ast [E^Y ( T , \tau )\otimes I]   V \bigg\} },
\label{eq:main}
\end{eqnarray}

where $F^Y(t_2 , t_1 ) = \vec{T} e^{\int_{t_1}^{t_2} [ \sum_k L_k dY_k (t) + K dt ] } $,
as in (\ref{eq:F_Y}), $ \hat{\rho}_{\text{sys}}  (\tau )$ is the solution of the filter equation (\ref{eq:filter_basic}) \textit{without} the probe,
and 
\begin{eqnarray}
E^Y(t_2 , t_1 ) = F^Y(t_2 , t_1 )^\ast F^Y(t_2 , t_1 ) .
\end{eqnarray}

\begin{remark}
It is possible - and computationally preferable - to replace $\hat{\rho}_t$ in (\ref{eq:main}) with its unnormalized version $\rho_t$. Indeed, this is what is presented in \cite{Gammelmark13}.
\end{remark}

It is possible to use a reversed Markov process description here to compute $E^Y(T, t )$ as a function of the earliest time $t$. Let us introduce the notation of a backward (past-pointing) It\={o} increment
\begin{eqnarray}  
\overset{\leftarrow}{d} X (t) = X( t ) - X (t -dt ) ,
\end{eqnarray}
where $ dt >0$, then 
\begin{eqnarray*}  
\overset{\leftarrow}{d} F ^Y(T,t) = F^Y ( T , t ) \, \bigg( \sum_k L_k \overset{\leftarrow}{d} Y_k (t) +K dt \bigg) ,
\end{eqnarray*}
from which we find using the It\={o} calculus
\begin{eqnarray}  
\overset{\leftarrow}{d} E ^Y(T,t)&=& \mathcal{L} \bigg( E^Y ( T , t ) \bigg) \, dt \nonumber \\
& +&
 \sum_k\bigg(  E^Y ( T , t ) L_k +L^\ast_k E^Y ( T , t ) \bigg) \, \overset{\leftarrow}{d} Y_k (t).\nonumber \\
\label{eq:E_back}
\end{eqnarray}

Note that the equation (\ref{eq:E_back}) is structurally identical to the Belavkin-Zakai equation (\ref{eq:BZ}) except for the fact that the former now involves backwards It\={o} increments.

In their treatment, Gammelmark et al. take the pair $\big( \rho_\tau , E^Y (T, \tau ) \big)$ to constitute \emph{the} conditioned state of the system (\textit{and} probe) at intermediate time $\tau$ given the measured quadratures over the time $[0,T]$. Their expression has the following alluring feature: $\rho_\tau$ is the unnormalized state which depends on the past measurements $\mathfrak{Y}_{[0, \tau ]}$ and satisfies the Belavkin-Zakai equation (\ref{eq:BZ}) in $\tau$ with initial condition $\rho_0$ while $ E^Y (T , \tau )$ is an effect which depends on the future measurements $\mathfrak{Y}_{( \tau , T ]}$ and satisfies a time reversed Belavkin-Zakai (\ref{eq:E_back}) in $\tau $ with terminal condition $E^Y (T,T ) =I$.

\section{Multitime Interventions}
\label{sec:MultiTime}
The generalization to several intervention measurements is fairly straightforward at this stage. We consider $r$ measurements all made by independent probes: the total probe space will then be the tensor product of the individual probe spaces. The measurements will take place at times $\tau_1 < \tau_2 < \cdots \tau_r $ in the time interval $[0,T]$ and for the $k$th probe we will measure an observable
\[
M_k = \sum_m m P_{M_k} (m) .
\]
For convenience, we will drop the tensor product symbols. The various projections $P_{M_k}$ will then by assumption commute with each other, and with the system and bath observables.

Furthermore we will assume that the entanglement of the system and the $k$th probe immediately before time $\tau_k$ will be implemented by a unitary $V_k$. We seek the probability $p (m_1 , \tau_1 ;  \cdots ; m_r , \tau_r ) $ for a given sequence $(m_1 , \cdots , m_r )$ of measurements conditional on the quadrature measurements over time $[0,T]$.

Let us introduce the mapping
\begin{eqnarray}
\mathscr{G}_k [A] = V^\ast_k [\tilde{F}^Y (t_{k+1} , t_k )^\ast  \otimes I] A \, [\tilde{F}^Y (t_{k+1} , t_k ) \otimes I] V_k \nonumber \\
\label{eq:mathscrG}
\end{eqnarray}
for $k=0, \cdots , r$, where we understand that $t_0 =0, t_{r+1} =T$ and $V_0 = I$.

At this stage, we should add that the conditioning results in all bath operators appearing being diagonal in terms of the measured quadratures. As such we can adopt the view that the family of commuting observables $Y_k (t)$ be simply considered as classical stochastic processes. As a result, the $\mathscr{G}_k$ defined in (\ref{eq:mathscrG}) are super-operators on the system-probe observables, while the $V_k$ and the $I$ in (\ref{eq:main_multitime}) may be understood as acting on the system-probe Hilbert space.

\begin{eqnarray}
p (m_1 , \tau_1 ;  \cdots ; m_r , \tau_r ) 
&= & \frac{ \text{tr} \bigg\{ \rho_{\text{sys}}    \otimes \rho_{\text{probe}} \, \, 
    \mathscr{G}_0  \bigg(      \mathscr{G}_1 	 \bigg( \cdots \mathscr{G}_{r-1} \big(\mathscr{G}_r (P_r) P_{r-1} \big) \cdots P_1 \bigg) \bigg)  \bigg\} }
{  \text{tr} \bigg\{ \rho_{\text{sys}}    \otimes \rho_{\text{probe}} \, \, 
    \mathscr{G}_0  \bigg(      \mathscr{G}_1 	 \bigg( \cdots \mathscr{G}_{r-1} \big(\mathscr{G}_r (I) I \big) \cdots I \bigg) \bigg)  \bigg\} } ,
\label{eq:main_multitime}
\end{eqnarray}

\section{Conclusion}
To set things in historical context, the mid-1920's saw the development of quantum theory as a fundamentally new theory of Nature. The Born statistical interpretation introduced the probabilistic feature to the theory, while Heisenberg's noncommutativity of observables and uncertainty principle signaled that this would be a major departure from classical probability theory. This culminated in von Neumann's axiomatic formulation of quantum mechanics. Actually, the axioms of classical probability by Kolmogorov appeared only in the following year, but rapidly led to widespread applications in science and engineering. The estimation problem for classical systems was pioneered by control theorists such as Stratonovich and Kalman with later contributions by Zakai, Duncan, etc., and is now a mature field in communications and signal processing. 

In the classical case, the input-output models are causal and there is no issue with incompatibility of random variables: therefore estimation of future, present and past variables is not a problem. The situation changes drastically in quantum theory. The formulation of quantum filtering was given by Belavkin and showed how the results of Stratonovich and Kalman could be extended to the quantum domain. 

The question of estimating the past is problematic in quantum theory. It does not make sense to estimate which slit a quantum particle went through in a two slit experiment if you have not measured it.
We know Bayes Theorem does not work, so a blind use of the conditioning rules is unwarranted. The question as to when we may legitimately use (as opposed to misuse) estimation theory for quantum retrodiction is one that needs a detailed analysis. But the questions posed in the physics literature are of interest as the push to concepts of quantum measurement and estimation to new areas, and add new layers to the traditional notions of quantum states and observables. Fortunately, the simper intuition comes out intact for the class problems considered but this required a subtle analysis of what is actually going on.

In the paper, we show the consistency of the condition rule for (multiple) external measurements on an open quantum system during a period where it is undergoing simultaneaous monitoring.

\bigskip

\textbf{Acknowledgement} The author is grateful to Klaus M\o lmer and Benjamin Huard for comments.

\end{document}